\numberwithin{equation}{section}
\newtheorem{proposition}{Proposition}
\newtheorem{corollary}{Corollary}[proposition]
\begin{document}

\title{\textbf{Absence of radiative corrections in 4-dimensional topological Yang-Mills theories}}

\author{\textbf{O.~C.~Junqueira$^1$}\thanks{octavio@if.uff.br}\ , \textbf{A.~D.~Pereira$^2$}\thanks{a.pereira@thphys.uni-heidelberg.de}\ , \textbf{G.~Sadovski$^{1}$}\thanks{gsadovski@id.uff.br}\  , \\ \textbf{R.~F.~Sobreiro$^1$}\thanks{rodrigo\_sobreiro@id.uff.br}\ , \textbf{A.~A.~Tomaz$^3$}\thanks{tomaz@cbpf.br}\\\\
\textit{{\small $^1$UFF - Universidade Federal Fluminense, Instituto de F\'isica,}}\\
\textit{{\small Av. Gal. Milton Tavares de Souza s/n, 24210-346, Niter\'oi, RJ, Brasil.}}\\
\textit{{\small $^2$Institute for Theoretical Physics, University of Heidelberg,}}\\
\textit{{\small Philosophenweg 12, 69120 Heidelberg, Germany.}}\\
\textit{{\small $^3$CBPF - Centro Brasileiro de Pesquisas F\'isicas,}}\\
\textit{{\small Rua Dr. Xavier Sigaud 150, 22290-180, Rio de Janeiro, RJ, Brasil.}}}
\date{}
\maketitle
 
\begin{abstract}
We prove that all connected $n$-point Green functions of four-dimensional topological Yang-Mills theories quantized in the (anti-)self-dual Landau gauges are tree-level exact, \emph{i.e.} there are no radiative corrections in this gauge choice.
\end{abstract}

\section{Introduction}

Topological Yang-Mills theories are, essentially, background independent gauge theories which only depend on global degrees of freedom \cite{Witten:1988ze,Baulieu:1988xs,Blau:1990nv,Abud:1991mu}. The first ideas about topological solutions from self-dual Yang-Mills equations go back to \cite{Belavin:1975fg}, where topological solutions known as instantons were used to explain the $U(1)$ problem in non-Abelian theories \cite{tHooft:1976snw}. Another interesting four-dimensional theory is the so-called Donaldson-Witten topological quantum field theory, which can be taken as a tool to compute the Donaldson topological invariants \cite{Witten:1988ze,Brooks:1988jm,Birmingham:1991ty}. Related to the latter, it was discussed in \cite{Baulieu:1988xs} that Donaldson-Witten's theory can be recovered from a particular gauge fixing of an action which is a pure topological invariant. The gauge fixing of four-dimensional topological gauge theories was also investigated in \cite{Myers:1989ur}. The particular choice of (anti-)self-dual Landau gauges ((A)SDLG) was first studied in \cite{Brandhuber:1994uf}. The renormalizability of the theory quantized in this gauge was investigated in \cite{Brandhuber:1994uf,Junqueira:2017zea}, where the algebraic renormalization technique \cite{Baulieu:1981sb,Piguet:1995er} was employed. It was verified that, due to the rich set of Ward identities in this gauge, there is only one independent renormalization parameter. Moreover, all propagators were shown to be tree-level exact (as well as all 1PI two-point Green functions). In particular, it was proven that the gauge propagator and the vacuum polarization vanish to all orders of perturbation theory, see \cite{Junqueira:2017zea}.

The aim of the present paper is to improve the understanding of quantum four-dimensional topological Yang-Mills theory in the (A)SDLG. Specifically, we show that all connected $n$-point Green functions are tree-level exact, \emph{i.e.} that all connected $n$-point Green functions of the theory do not receive any radiative corrections. For what follows, the fact that the gauge field propagator vanishes to all orders in perturbation theory and concepts of BRST cohomology are pivotal.

This work is organized as follows: In Section \ref{TYMT} we provide an overview of four-dimensional topological Yang-Mills theories quantized in the (A)SDLG. Section \ref{PROOF} is devoted to the proof of the tree-level exactness of all connected $n$-point Green functions and, finally, Section \ref{FINAL} contains our final considerations. 

\section{Topological gauge theories at the (anti-)self-dual Landau gauges}\label{TYMT}

Topological Yang-Mills theories\footnote{It is worth mentioning that the action $S_0(A)$ encompasses a wide range of topological gauge theories. The Pontryagin action is the most common case because it can be defined for all semi-simple Lie groups. Nevertheless, other cases can also be considered. For instance, Gauss-Bonnet and Nieh-Yang topological gravities can be formulated for orthogonal groups \cite{Mardones:1990qc}.} in four-dimensional Euclidean spacetime\footnote{In this work, we consider flat Euclidean spacetime. Although the topological action is background independent, the gauge-fixing term entails the introduction of a background. Ultimately, background independence is recovered at the level of correlation function due to BRST symmetry \cite{Baulieu:1988xs,Blau:1990nv,Abud:1991mu}.} can be defined by a topological invariant action, $S_0[A]$, where $A^a_\mu$ is the gauge field algebra-valued in a semi-simple Lie group $G$. As discussed in \cite{Brandhuber:1994uf}, a topological Yang-Mills theory carries three independent gauge symmetries, namely, 
\begin{eqnarray}
\delta A^a_\mu &=& D_\mu^{ab}\alpha^b+\alpha_\mu^a\;, \label{gt1}\\
\delta F^a_{\mu\nu} &=& -gf^{abc}\alpha^bF_{\mu\nu}^c+D_\mu^{ab}\alpha_\nu^b-D_\nu^{ab}\alpha_\mu^b\;, \label{gt2}\\
\delta\alpha_\mu^a &=& D_\mu^{ab}\lambda^b\;,\label{gt3}
\end{eqnarray}
where $D_\mu^{ab} = \delta^{ab}\partial_\mu - f^{abc}A^b_\mu$ is the covariant derivative in the adjoint representation of $G$, $F_{\mu\nu}^a=\partial_\mu A_\nu^a-\partial_\nu A_\mu^a+gf^{abc}A_\mu^bA_\nu^c$ is the field strength and $g$ is the coupling parameter. The parameters $\alpha^a_\mu$, $\alpha^a$ and $\lambda^a$ are $G$-valued gauge parameters. Thus, in order to quantize the theory, three gauge constraints are needed. The BRST quantization will be employed for this purpose.

The BRST procedure \cite{Baulieu:1988xs,Baulieu:1981sb,Piguet:1995er} starts by promoting the gauge parameters to ghost fields: $\alpha^a_\mu\rightarrow\psi^a_\mu$, $\alpha^a\rightarrow c^a$, and $\lambda^a\rightarrow\phi^a$, and in the definition of a nilpotent BRST symmetry,
\begin{eqnarray}
sA_\mu^a&=&-D_\mu^{ab}c^b+\psi^a_\mu\;,\nonumber\\
sc^a&=&\frac{g}{2}f^{abc}c^bc^c+\phi^a\;,\nonumber\\
s\psi_\mu^a&=&gf^{abc}c^b\psi^c_\mu+D_\mu^{ab}\phi^b\;,\nonumber\\
s\phi^a&=&gf^{abc}c^b\phi^c\;.\label{brst1}
\end{eqnarray}
In \eqref{brst1}, $c^a$, $\psi^a_\mu$ and $\phi^a$ are, respectively, the Faddeev-Popov ghost field, the topological ghost field and the bosonic ghost field, while $s$ is the nilpotent BRST operator. 

The gauge choice we employ in this work is the (anti-)self-dual Landau gauges \cite{Brandhuber:1994uf,Junqueira:2017zea}, defined by
\begin{eqnarray}
\partial_\mu A_\mu^a&=&0~,\nonumber\\
F^a_{\mu\nu}\pm\widetilde{F}_{\mu\nu}^a&=&0~,\nonumber\\
\partial_\mu\psi^a_\mu&=&0\;,\label{gf1}
\end{eqnarray}
where $\tilde{F}^a_{\mu\nu}\equiv\frac{1}{2}\epsilon_{\mu\nu\alpha\beta}F^a_{\alpha\beta}$  is the dual field strength. To enforce such constraints, three BRST doublets are needed:
\begin{eqnarray}
s\bar{c}^a&=&b^a\;,\;\;\;\;\;\;\;\;sb^a\;=\;0\;,\nonumber\\
s\bar{\chi}^a_{\mu\nu}&=&B_{\mu\nu}^a\;,\;\;\;sB_{\mu\nu}^a\;=\;0\;,\nonumber\\
s\bar{\phi}^a&=&\bar{\eta}^a\;,\;\;\;\;\;\;\;s\bar{\eta}^a\;=\;0\;,\label{brst2}
\end{eqnarray}
where $\bar{\chi}^a_{\mu\nu}$ and $B_{\mu\nu}^a$ are (anti-)self-dual fields, according to the the positive (negative) sign in the second condition in \eqref{gf1}. In \eqref{brst2}, the fields $b^a$, $B^a_{\mu\nu}$ and $\bar{\eta}^a$ are the Lautrup-Nakanishi fields which implement the gauge conditions \eqref{gf1} while $\bar{c}^a$, $\bar{\chi}^a_{\mu\nu}$ and $\bar{\phi}^a$ are the Faddeev-Popov,  topological and bosonic anti-ghost fields, respectively. For completeness, the quantum numbers of all fields are displayed in Table \ref{table1}.
\begin{table}[h]
\centering
\setlength{\extrarowheight}{.5ex}
\begin{tabular}{cc@{\hspace{-.3em}}cccccccccc}
\cline{1-1} \cline{3-12}
\multicolumn{1}{|c|}{Field} & & \multicolumn{1}{|c|}{$A$} & \multicolumn{1}{c|}{$\psi$} & \multicolumn{1}{c|}{$c$} & \multicolumn{1}{c|}{$\phi$} & \multicolumn{1}{c|}{$\bar{c}$} & \multicolumn{1}{c|}{$b$} &\multicolumn{1}{c|}{$\bar{\phi}$} & \multicolumn{1}{c|}{$\bar{\eta}$} & \multicolumn{1}{c|}{$\bar{\chi}$} & \multicolumn{1}{c|}{$B$}   
\\ \cline{1-1} \cline{3-12} 
\\[-1.18em]
\cline{1-1} \cline{3-12}
\multicolumn{1}{|c|}{Dim} & & \multicolumn{1}{|c|}{1} & \multicolumn{1}{c|}{1} & \multicolumn{1}{c|}{0} & \multicolumn{1}{c|}{0} & \multicolumn{1}{c|}{2} & \multicolumn{1}{c|}{2} &\multicolumn{1}{c|}{2} & \multicolumn{1}{c|}{2} & \multicolumn{1}{c|}{2} & \multicolumn{1}{c|}{2}   
\\
\multicolumn{1}{|c|}{Ghost n$^o$} & & \multicolumn{1}{|c|}{0} & \multicolumn{1}{c|}{1} & \multicolumn{1}{c|}{1} & \multicolumn{1}{c|}{2} & \multicolumn{1}{c|}{-1} & \multicolumn{1}{c|}{0} &\multicolumn{1}{c|}{-2} & \multicolumn{1}{c|}{-1} & \multicolumn{1}{c|}{-1} & \multicolumn{1}{c|}{0}   
\\ \cline{1-1} \cline{3-12} 
\end{tabular}
\caption{Canonical dimension and ghost number of the fields.}
\label{table1}
\end{table}

The complete gauge fixing action in the gauge \eqref{gf1} takes the form
\begin{eqnarray}
S_{gf}&=&s\int d^4z\left[\bar{c}^a\partial_\mu A_\mu^a+\frac{1}{2}\bar{\chi}^a_{\mu\nu}\left(F_{\mu\nu}^a\pm\widetilde{F}_{\mu\nu}^a \right)+\bar{\phi}^a\partial_\mu\psi^a_\mu\right]\nonumber\\
&=&\int d^4z\left[b^a\partial_\mu A_\mu^a+\frac{1}{2}B^a_{\mu\nu}\left(F_{\mu\nu}^a\pm\widetilde{F}_{\mu\nu}^a\right)+
\left(\bar{\eta}^a-\bar{c}^a\right)\partial_\mu\psi^a_\mu+\bar{c}^a\partial_\mu D_\mu^{ab}c^b+\right.\nonumber\\
&-&\left.\frac{1}{2}gf^{abc}\bar{\chi}^a_{\mu\nu}c^b\left(F_{\mu\nu}^c\pm\widetilde{F}_{\mu\nu}^c\right)-\bar{\chi}^a_{\mu\nu}\left(\delta_{\mu\alpha}\delta_{\nu\beta}\pm\frac{1}{2}\epsilon_{\mu\nu\alpha\beta}\right)D_\alpha^{ab}\psi_\beta^b+\bar{\phi}^a\partial_\mu D_\mu^{ab}\phi^b+\right.\nonumber\\
&+&\left.gf^{abc}\bar{\phi}^a\partial_\mu\left(c^b\psi^c_\mu\right)\right]\;.\label{gfaction}
\end{eqnarray}
The full action $S[\Phi_i]$, for all fields  $\Phi_i \equiv \{A, \psi, \bar{\chi}, c, \bar{c}, \phi, \bar{\phi},\eta, \bar{\eta}, B, b\}$, is then composed by a sum of a topological invariant term and a BRST-exact one, 
\begin{equation}
S = S_0 + S_{gf}~.\label{fullaction}
\end{equation}

The action \eqref{fullaction} has a few interesting quantum properties \cite{Junqueira:2017zea}:
\begin{itemize}
\item It is renormalizable to all orders in perturbation theory. Moreover, due to the rich set of Ward identities displayed by the action \eqref{fullaction}, the most general counterterm carries only one independent renormalization parameter, denoted by $a$:
\begin{equation}
\Sigma^{c} =a\int d^4x\left(B^a_{\mu\nu} F^a_{\mu\nu} - 2\bar{\chi}^a_{\mu\nu}D^{ab}_\mu\psi^b_\nu - gf^{abc}\bar{\chi}_{\mu\nu}^ac^bF^c_{\mu\nu}\right)\;.\label{GCT}
\end{equation}

\item All propagators are tree-level exact, which means that, in the (A)SDLG, they do not receive radiative corrections.

\item In particular, the gauge field propagator vanishes exactly in the (A)SDLG:
\begin{equation}
\langle A_\mu^aA_\nu^b\rangle(p)=0\;.\label{AA1}
\end{equation}

\end{itemize}

In the following, we collect the Feynman rules derived from \eqref{fullaction}. The relevant propagators are represented by\footnote{From \eqref{gfaction}, a $\bar{c}\psi$ mixed propagator also seems to be relevant. However, this term can easily be eliminated by a trivial-Jacobian redefinition of the $\bar{\eta}$ field given by $\bar{\eta}\rightarrow\bar{\eta}+\bar{c}$.}: 

\begin{align}
	\langle AA \rangle &=
	\begin{tikzpicture}
		\begin{feynhand}
			\vertex (1);
			\vertex (2) [right=30pt of 1] ;
			\propag[glu] (1) to (2);
		\end{feynhand}
	\end{tikzpicture}
	\quad ,&
	\langle c\bar{c} \rangle &=
	\begin{tikzpicture}
		\begin{feynhand}
			\vertex (1);
			\vertex (2) [right=30pt of 1];
			\propag[gho] (1) to (2);
		\end{feynhand}
	\end{tikzpicture}
	\quad ,&
	\langle \bar{\chi}\psi \rangle &=
	\begin{tikzpicture}
		\begin{feynhand}
			\vertex (1);
			\vertex (2) [right=30pt of 1];
			\propag[pho] (1) to (2);
		\end{feynhand}
	\end{tikzpicture}
	\quad ,&
	\langle Ab  \rangle &=
	\begin{tikzpicture}
		\begin{feynhand}
			\vertex (1);
			\vertex (2)  [right=18pt of 1];
			\vertex (3) [right=12pt of 2];
			\graph{(1) --[glu] (2) --[plain] (3)};
		\end{feynhand}
	\end{tikzpicture}
	\quad ,\qquad \nonumber\\
	\langle \bar\eta \psi \rangle &=
	\begin{tikzpicture}
		\begin{feynhand}
			\vertex (1);
			\vertex (2)  [right=16pt of 1];
			\vertex (3) [right=14pt of 2];
			\graph{(1) --[sca] (2) --[pho] (3)};
		\end{feynhand}
	\end{tikzpicture}
	\quad ,&
	\langle AB \rangle &=
	\begin{tikzpicture}
		\begin{feynhand}
			\vertex (1);
			\vertex (2)  [right=18pt of 1];
			\vertex (3) [right=12pt of 2];
			\vertex (2^) [above=2pt of 2];
			\vertex (3^) [above=2pt of 3];
			\graph{(1) --[glu] (2) --[plain] (3)};
			\propag[plain] (2^) to (3^);
		\end{feynhand}
	\end{tikzpicture}
	\quad ,&
	\langle	\phi \bar{\phi} \rangle &=
	\begin{tikzpicture}
		\begin{feynhand}
			\vertex (1);
			\vertex (1^) [above=2pt of 1];
			\vertex (2) [right=30pt of 1];
			\vertex (2^) [above=2pt of 2];
			\propag[gho] (1) to (2);
			\propag[gho] (1^) to (2^);
		\end{feynhand}
	\end{tikzpicture}
	\quad .
\label{eq.propags}
\end{align}
The relevant vertexes are represented by:
\begin{align}
	&
	\begin{tikzpicture}
		\begin{feynhand}
			\vertex[dot] (0) {};
			\vertex (1) [below right=0pt of 0];
			\vertex (1^) [above=2pt of 1];
			\vertex (1<) [left=25pt of 1];
			\vertex (1^<) [left=25pt of 1^];
			\vertex [above left=0pt of 1<];
			\vertex (2) [above right=25pt of 1];
			\vertex (3) [below right=25pt of 1];
			\propag[plain] (1) to [edge label=$B$] (1<);
			\propag[plain] (1^) to (1^<);
			\propag[glu] (1) to [edge label=$A$](2);
			\propag[glu] (1) to [edge label=$A$](3);
		\end{feynhand}
	\end{tikzpicture}
	\quad ,&&
	\begin{tikzpicture}
		\begin{feynhand}
			\vertex[dot] (1) {};
			\vertex (2) [left=25pt of 1] {};
			\vertex (3) [above right=25pt of 1];
			\vertex (4) [below right=25pt of 1];
			\propag[pho] (1) to [edge label=$\bar\chi$] (2);
			\propag[glu] (1) to [edge label=$A$] (3);
			\propag[gho] (1) to [edge label=$c$] (4);
		\end{feynhand}
	\end{tikzpicture}
	\quad ,&&
	\begin{tikzpicture}
		\begin{feynhand}
			\vertex[dot] (1) {};
			\vertex (2) [left=25pt of 1];
			\vertex (3) [above right=25pt of 1];
			\vertex (4) [below right=25pt of 1];
			\propag[glu] (1) to [edge label=$A$] (2);
			\propag[pho] (1) to [edge label=$\psi$] (3);
			\propag[pho] (1) to [edge label=$\bar\chi$] (4);
		\end{feynhand}
	\end{tikzpicture}
	\quad ,&&
	\begin{tikzpicture}
		\begin{feynhand}
			\vertex[dot] (1) {};
			\vertex (2) [left=25pt of 1];
			\vertex (3) [above right=25pt of 1];
			\vertex (4) [below right=25pt of 1];
			\propag[glu] (1) to [edge label=$A$] (2);
			\propag[gho] (1) to [edge label=$c$] (3);
			\propag[gho] (1) to [edge label=$\bar{c}$] (4);
		\end{feynhand}
	\end{tikzpicture}
	\quad , \qquad \nonumber\\
	&
	\begin{tikzpicture}
		\begin{feynhand}
			\vertex[dot] (1) {};
			\vertex (2) [left=25pt of 1];
			\vertex (3) [above right=25pt of 1];
			\vertex (4) [below right=25pt of 1];
			\vertex (1>) [right=1pt of 1];
			\vertex (1>^) [above right=25pt of 1>];
			\vertex (1>v) [below right=25pt of 1>];
			\propag[glu] (1) to [edge label=$A$] (2);
			\propag[gho] (1) to [edge label=$\bar\phi$] (3);
			\propag[gho] (1) to [edge label=$\phi$] (4);
			\propag[gho] (1>) to (1>^);
			\propag[gho] (1>) to (1>v);
		\end{feynhand}
	\end{tikzpicture}
	\quad ,&&
	\begin{tikzpicture}
		\begin{feynhand}
			\vertex[dot] (0) {};
			\vertex (1);
			\vertex (1^) [above=2pt of 1];
			\vertex (1<) [left=25pt of 1];
			\vertex (1^<) [left=25pt of 1^];
			\vertex [above=0pt of 1^<];
			\vertex (2) [above right=25pt of 1];
			\vertex (3) [below right=25pt of 1];
			\propag[gho] (1) to [edge label=$\bar\phi$] (1<);
			\propag[gho] (1^) to (1^<);
			\propag[pho] (1) to [edge label=$\psi$] (2);
			\propag[gho] (1) to [edge label=$c$] (3);
		\end{feynhand}
	\end{tikzpicture}
	\quad ,&&
	\begin{tikzpicture}
		\begin{feynhand}
			\vertex[dot] (1) {};
			\vertex (2) [above left=25pt of 1];
			\vertex (3) [below left=25pt of 1];
			\vertex (4) [above right=25pt of 1];
			\vertex (5) [below right=25pt of 1];
			\propag[glu] (1) to [edge label=$A$] (2);
			\propag[pho] (1) to [edge label=$\bar{\chi}$] (3);
			\propag[glu] (1) to [edge label=$A$] (4);
			\propag[gho] (1) to [edge label=$c$] (5);
		\end{feynhand}
	\end{tikzpicture}
\quad .
\label{eq.vertices}
\end{align}

In principle we do not have to include the gauge propagator in \eqref{eq.propags} -- which is null -- but this will be necessary to visualize the tree-level exactness of the theory, since such a propagator, as discussed later on, is required to close loops, leading to vanishing diagrams at the quantum level. 

\section{Absence of radiative corrections} \label{PROOF}

To show that the action \eqref{fullaction} defines a theory free of radiative corrections, it is convenient to split the argumentation into propositions.

\begin{proposition}
Any connected loop diagram containing an internal $A$-leg vanishes unless the branch generated by the $A$-leg ends up in external $B$- or $b$-legs.\label{prop1}
\end{proposition}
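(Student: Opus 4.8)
The plan is to exploit the extremely restricted set of propagators and vertices that involve the gauge field $A$. First I would record the two structural facts that drive the whole argument: (i) by \eqref{AA1} the $\langle AA\rangle$ propagator vanishes identically, so in a non-vanishing diagram an internal $A$-leg can only be contracted through a $\langle Ab\rangle$ or an $\langle AB\rangle$ mixed propagator; and (ii) inspecting \eqref{gfaction}, the field $b$ enters the action only through the free term $b^a\partial_\mu A^a_\mu$ and therefore appears in no interaction vertex, while $B$ enters only through $\tfrac12 B^a_{\mu\nu}(F^a_{\mu\nu}\pm\widetilde F^a_{\mu\nu})$, whose bilinear part produces the $\langle AB\rangle$ propagator and whose only interaction piece is the $BAA$ vertex $\tfrac{g}{2}f^{abc}B^a_{\mu\nu}A^b_\mu A^c_\nu$. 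Hence a $b$-leg is necessarily external, and an internal $B$-leg is necessarily attached to a $BAA$ vertex.

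Next I would define the branch generated by the given internal $A$-leg by tracing it through the diagram: follow its propagator to the leg it is contracted with, whose field type can only be $A$, $b$ or $B$. If it is $A$, the diagram carries an explicit factor $\langle AA\rangle=0$ and vanishes. If it is $b$, the branch terminates on an external $b$-leg. If it is $B$, then either the branch terminates on an external $B$-leg, or the $B$-leg sits on a $BAA$ vertex, which emits two further $A$-legs; one then repeats the construction on each of these. Iterating, every step that does not terminate consumes a distinct $BAA$ vertex and a distinct $\langle AB\rangle$ line, and a Feynman graph has only finitely many of each, so the procedure halts: the branch is a finite tree whose internal lines are $\langle AB\rangle$ propagators, whose internal nodes are $BAA$ vertices, and whose leaves are either an $\langle AA\rangle$ contraction (in which case the full diagram vanishes) or external $b$- or $B$-legs. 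Collecting the cases, if the connected loop diagram does not vanish then the branch must end on external $B$- or $b$-legs, which is the claim.

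The only real subtlety I anticipate is making the notion of ``the branch'' unambiguous and checking that it is genuinely tree-like, so that the dichotomy ``diagram vanishes'' versus ``branch ends on external $B$- or $b$-legs'' is exhaustive. Concretely one must verify that the branch cannot close a cycle onto itself without passing through an $\langle AA\rangle$ contraction: each $BAA$ vertex carries a single $B$-leg, already saturated by the $\langle AB\rangle$ propagator that reached it, so no cycle built out of $\langle AB\rangle$ lines can form, and any other way of closing the branch is a contraction of two $A$-legs, which again annihilates the diagram. The propagator/vertex bookkeeping in (i)--(ii) is routine but must be carried out exhaustively, since the whole mechanism rests on $A$ pairing only with $\{A,b,B\}$, on $b$ occurring in no vertex, and on $B$ occurring only in the $BAA$ vertex.
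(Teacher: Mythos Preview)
Your proposal is correct and follows essentially the same strategy as the paper: use $\langle AA\rangle=0$ to force any internal $A$-leg through $\langle Ab\rangle$ or $\langle AB\rangle$, note that $b$ has no vertex while $B$ sits only on the $BAA$ vertex, and then iterate to generate the proliferating tree of $A$-legs whose only non-vanishing terminations are external $B$- or $b$-legs. Your version is in fact slightly more explicit than the paper's about why the procedure terminates (finitely many vertices and lines) and why the branch cannot close a cycle without an $\langle AA\rangle$ contraction, but the core argument is the same.
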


\begin{proof}
To prove this proposition, we must consider a combination of two facts: 1) $\langle AA \rangle=0$ to all orders and 2) the gauge field only propagates through the non-vanishing mixed propagators $\langle BA\rangle$ and $\langle bA\rangle$. Hence, from an internal $A$-leg arising from an arbitrary vertex, denoted by a black dot,
$
\begin{tikzpicture}
	\begin{feynhand}
		\vertex[dot] (1) {};
		\vertex (2) [right=18pt of 1];
		\propag[glu] (1) to (2);
	\end{feynhand}
\end{tikzpicture},
$
we only have two possibilities:  
$
\begin{tikzpicture}
	\begin{feynhand}
		\vertex[dot] (1) {};
		\vertex (2)  [right=18pt of 1];
		\vertex (3) [right=12pt of 2];
		\graph{(1) --[glu] (2) --[plain] (3)};
	\end{feynhand}
\end{tikzpicture}
$ 
and
$
\begin{tikzpicture}
	\begin{feynhand}
		\vertex[dot] (1) {};
		\vertex (2)  [right=18pt of 1];
		\vertex (3) [right=12pt of 2];
		\vertex (2^) [above=2pt of 2];
		\vertex (3^) [above=2pt of 3];
		\graph{(1) --[glu] (2) --[plain] (3)};
		\propag[plain] (2^) to (3^);
	\end{feynhand}
\end{tikzpicture}.
$
In the same way, the fields $B$ and $b$ only propagate through $A$. Graphically, we now have 
$
\begin{tikzpicture}
	\begin{feynhand}
		\vertex[dot] (1) {};
		\vertex (2)  [right=18pt of 1];
		\vertex[dot] (3) [right=12pt of 2] {};
		\graph{(1) --[glu] (2) --[plain] (3)};
	\end{feynhand}
\end{tikzpicture}
$ 
and
$
\begin{tikzpicture}
	\begin{feynhand}
		\vertex[dot] (1) {};
		\vertex (2)  [right=18pt of 1];
		\vertex (3) [right=12pt of 2];
		\vertex (2^) [above=2pt of 2];
		\vertex (3^) [above=2pt of 3];
		\vertex[dot] [right=12pt of 2] {};
		\graph{(1) --[glu] (2) --[plain] (3)};
		\propag[plain] (2^) to (3^);
	\end{feynhand}
\end{tikzpicture}.
$
Nonetheless, the former is not at our disposal since there is no vertex containing $b$, \textit{vide} \eqref{eq.vertices}. The latter, on the other hand, must be a $BAA$ vertex since it is the only one containing $B$. Thus, an internal $A$-leg in any loop diagram will propagate to $B$ and the latter will end up in a $BAA$ vertex,
\begin{equation}
	\begin{tikzpicture}
		\begin{feynhand}
		\vertex[dot] (1) {};
		\vertex (2)  [right=18pt of 1];
		\vertex (3) [right=12pt of 2];
		\vertex (2^) [above=2pt of 2];
		\vertex (3^) [above=2pt of 3];
		\vertex[dot] (4) [right=12pt of 2] {};
		\vertex (5) [above right=18pt of 4];
		\vertex (6) [below right=18pt of 4];
		\graph{(1) --[glu] (2) --[plain] (3)};
		\propag[plain] (2^) to (3^);
		\propag[glu] (4) to (5);
		\propag[glu] (4) to (6);
		\end{feynhand}
	\end{tikzpicture}
	\; .
\end{equation}
Applying the above reasoning for the two newly created $A$-legs, we end up with two more $BAA$ vertexes and four $A$-legs. Since the number of $A$-legs only increases, we can continue this process \textit{ad infinitum} leading to a cascade effect of exponential proliferation of $A$-legs:
\begin{equation}
	\begin{split}
	\begin{tikzpicture}
		\begin{feynhand}
		\vertex[dot] (1) {};
		\vertex (2)  [right=18pt of 1];
		\vertex (3) [right=12pt of 2];
		\vertex (2^) [above=2pt of 2];
		\vertex (3^) [above=2pt of 3];
		\vertex[dot] (4) [right=12pt of 2] {};
		\vertex (5) [above right=18pt of 4];
		\vertex (6) [above right=12pt of 5];
		\vertex (5^) [above left=2pt of 5];
		\vertex (6^) [above left=2pt of 6];
		\vertex[dot] (7) [above right=12pt of 5] {};
		\vertex (8) [above=18pt of 7];
		\vertex (9) [above=12pt of 8];
		\vertex (8^) [left=2pt of 8];
		\vertex (9^) [left=2pt of 9];
		\vertex[dot] (10) [above=12pt of 8] {};
		\vertex (11) [above left=18pt of 10];
		\vertex (12) [above right=18pt of 10];
		\vertex[particle] [above=18pt of 10] {$\vdots$};
		\vertex (13) [right=18pt of 7];
		\vertex (14) [right=12pt of 13];
		\vertex (13^) [above=2pt of 13];
		\vertex (14^) [above=2pt of 14];
		\vertex[dot] (15) [right=12pt of 13] {};
		\vertex (16) [above right=18pt of 15];
		\vertex (17) [below right=18pt of 15];
		\vertex[particle] [right=15pt of 15] {$\cdots$};
		\vertex (18) [below right=18pt of 4];
		\vertex (19) [below right=12pt of 18];
		\vertex (18^) [above right=2pt of 18];
		\vertex (19^) [above right=2pt of 19];
		\vertex[dot] (20) [below right=12pt of 18] {};
		\vertex (21) [right=18pt of 20];
		\vertex (22) [right=12pt of 21];
		\vertex (21^) [above=2pt of 21];
		\vertex (22^) [above=2pt of 22];
		\vertex[dot] (23) [right=12pt of 21] {};
		\vertex (24) [above right=18pt of 23];
		\vertex (25) [below right=18pt of 23];
		\vertex[particle] [right=15pt of 23] {$\cdots$};
		\vertex (26) [below=18pt of 20];
		\vertex (27) [below=12pt of 26];
		\vertex (26^) [right=2pt of 26];
		\vertex (27^) [right=2pt of 27];
		\vertex[dot] (28) [below=12pt of 26] {};
		\vertex (29) [below right=18pt of 28];
		\vertex (30) [below left=18pt of 28];
		\vertex[particle] [below=15pt of 28] {$\vdots$};
		\graph{(1) --[glu] (2) --[plain] (3)};
		\propag[plain] (2^) to (3^);
		\graph{(4) --[glu] (5) --[plain] (6)};
		\propag[plain] (5^) to (6^);
		\graph{(7) --[glu] (8) --[plain] (9)};
		\propag[plain] (8^) to (9^);
		\propag[glu] (10) to (11);
		\propag[glu] (10) to (12);
		\graph{(7) --[glu] (13) --[plain] (14)};
		\propag[plain] (13^) to (14^);
		\propag[glu] (15) to (16);
		\propag[glu] (15) to (17);
		\graph{(4) --[glu] (18) --[plain] (19)};
		\propag[plain] (18^) to (19^);
		\graph{(20) --[glu] (21) --[plain] (22)};
		\propag[plain] (21^) to (22^);
		\propag[glu] (23) to (24);
		\propag[glu] (23) to (25);
		\graph{(20) --[glu] (26) --[plain] (27)};
		\propag[plain] (26^) to (27^);
		\propag[glu] (28) to (29);
		\propag[glu] (28) to (30);
		\end{feynhand}
	\end{tikzpicture}
	\;.
	\label{CE1}
	\end{split}
\end{equation}
There are three possibilities here:
1) trying to close a loop in the diagram \ref{CE1} requires an $\langle AA\rangle$ internal propagator, which would result in a vanishing diagram; 2) to consider external $A$-legs, which also requires a $\langle AA\rangle$ propagator, resulting in a vanishing diagram and; 3) one could consider that all remaining $A$-legs end up in external $B$- or $b$-legs.
\end{proof}

We should note that all vertexes, except one, present in \eqref{fullaction} contain at least one $A$-leg, therefore the cascade effect always occur for these cases. The only exception is the vertex $\bar{\phi}c\psi$.

\begin{corollary}
In a connected loop diagram, any branch arising from the vertex $\bar{\phi}c\psi$ results in a vanishing diagram unless this branch ends up in external $B$- or $b$-legs. \label{cor1}
\end{corollary}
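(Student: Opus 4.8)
The plan is to reduce the statement to Proposition \ref{prop1}, exploiting the fact that $\bar\phi c\psi$ is the \emph{only} vertex of \eqref{fullaction} without an $A$-leg: I will show that this vertex cannot actually escape the cascade, because every one of its legs, once internal, is forced onto a vertex that \emph{does} carry an $A$-leg. First I would read off from \eqref{eq.propags} the only propagators available to the three fields attached to this vertex. The antighost $\bar\phi$ propagates solely through the mixed propagator $\langle\phi\bar\phi\rangle$, so an internal $\bar\phi$-leg must meet a $\phi$-leg; the only vertex carrying $\phi$ is the $\bar\phi A\phi$ vertex arising from $\bar\phi^a\partial_\mu D^{ab}_\mu\phi^b$ in \eqref{gfaction}, which contains an $A$-leg. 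Likewise $c$ propagates only through $\langle c\bar c\rangle$, so an internal $c$-leg must meet a $\bar c$-leg, and the only vertex carrying $\bar c$ is the $\bar c A c$ vertex from $\bar c^a\partial_\mu D^{ab}_\mu c^b$, again carrying an $A$-leg. Finally $\psi$ propagates either through $\langle\bar\chi\psi\rangle$ or through $\langle\bar\eta\psi\rangle$; since $\bar\eta$ enters no vertex, the second option can only terminate on an external $\bar\eta$-leg, while in the first option the $\bar\chi$-leg must attach to one of the vertices $\bar\chi cA$, $\bar\chi cAA$ or $\bar\chi A\psi$ appearing in \eqref{eq.vertices}, each of which carries an $A$-leg.

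Having produced an $A$-leg in every non-trivial case, I would then invoke Proposition \ref{prop1}: this $A$-leg triggers the $BAA$ cascade, so the diagram either vanishes through a forced $\langle AA\rangle$ propagator or the cascade must terminate on external $B$- or $b$-legs. It remains to note that each vertex reached in the previous paragraph --- $\bar\phi A\phi$, $\bar c A c$, $\bar\chi cA$, $\bar\chi cAA$, $\bar\chi A\psi$ --- again carries a $\bar\phi$-, $c$- or $\psi$-leg, so the same dichotomy applies to the sub-branch growing out of it. Since a Feynman diagram has only finitely many vertices, this recursion terminates, and the branch originating at $\bar\phi c\psi$ is seen either to produce a vanishing $\langle AA\rangle$ propagator or to end on external $B$- or $b$-legs, its matter sub-chains possibly terminating on external $\bar\phi$, $c$, $\psi$ or $\bar\eta$ legs that do not affect the conclusion.

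The step I expect to be the main obstacle is making the recursion airtight: the branch is not of bounded depth, since the non-$A$ legs keep spawning further sub-branches, so the argument must be organised as an induction --- on, say, the number of vertices of the diagram not yet incorporated into the branch --- rather than a finite unfolding. One must also be explicit about why a loop can never be closed \emph{inside} the branch without an $\langle AA\rangle$ propagator: the $BAA$ cascade of Proposition \ref{prop1} strictly increases the number of $A$-legs at each step and so can never reconnect to itself, while the matter chains ($c$-, $\psi$- and $\bar\phi$-chains) are directed and, whenever they do close into ghost loops, do so only through vertices that still carry $A$-legs and are therefore again governed by Proposition \ref{prop1}. This is exactly what underlies the phrasing ``results in a vanishing diagram unless this branch ends up in external $B$- or $b$-legs''.
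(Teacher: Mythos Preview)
Your proposal is correct and follows essentially the same route as the paper: you identify the unique propagators available to the $\bar\phi$-, $c$- and $\psi$-legs, observe that each forces the branch onto a vertex carrying an $A$-leg (discarding $\langle\bar\eta\psi\rangle$ because $\bar\eta$ has no vertex), and then invoke Proposition~\ref{prop1}. The only difference is one of emphasis: the paper stops at that point, whereas you worry further about organising the recursion as an induction on the remaining vertices --- a level of formalisation the paper does not carry out.
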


\begin{proof}
Let us start with the vertex of interest, \emph{i.e.} $\bar{\phi}c\psi$. To construct a loop diagram from this three-vertex we have to propagate it to another vertex. The $\bar{\phi}$-leg could only propagate to the vertex $\bar{\phi}A\phi$ through $\langle\bar{\phi}\phi\rangle$; the $c$-leg only to $\bar{c}A c$ through $\langle \bar{c}c\rangle$ and; the $\psi$-leg to the vertexes $\bar{\chi}A \psi$, $\bar{\chi}cA$ or $\bar{\chi}c AA$ through $\langle \psi\bar{\chi}\rangle$ ($\langle \bar{\eta}\psi\rangle$ is not considered because there is no vertex containing $\bar{\eta}$). Graphically, the possibilities of completing the legs arising from this vertex are
\begin{equation}
	\begin{tikzpicture}
		\begin{feynhand}
			\vertex[dot] (0) {};
			\vertex (1);
			\vertex (1^) [above=2pt of 1];
			\vertex (1<) [left=25pt of 1];
			\vertex (1<^) [above=2pt of 1<];
			\vertex[dot] (1') [left=25pt of 1] {};
			\vertex (1<^<) [left=1pt of 1<^];
			\vertex[dot] (2) [above right=25pt of 1] {};
			\vertex (4) [above=25pt of 2];
			\vertex (5) [right=25pt of 2];
			\vertex[dot] (3) [below right=25pt of 1] {};
			\vertex (6) [right=25pt of 3];
			\vertex (7) [below=25pt of 3];
			\vertex (8) [below left=25pt of 1'];
			\vertex (9) [above left=25pt of 1<^<];
			\vertex (99) [below left=2pt of 1<^<];
			\vertex (9v) [above left=25pt of 99];
			\propag[gho] (1) to (1<);
			\propag[gho] (1^) to (1<^);
			\propag[pho] (1) to (2);
			\propag[gho] (1) to (3);
			\propag[glu] (2) to (4);
			\propag[gho] (2) to (5);
			\propag[glu] (3) to (6);
			\propag[gho] (3) to (7);
			\propag[glu] (1') to (8);
			\propag[gho] (1<^<) to (9);
			\propag[gho] (99) to (9v);
		\end{feynhand}
	\end{tikzpicture}
\quad ,\quad\quad\quad 
	\begin{tikzpicture}
		\begin{feynhand}
			\vertex[dot] (0) {};
			\vertex (1);
			\vertex (1^) [above=2pt of 1];
			\vertex (1<) [left=25pt of 1];
			\vertex (1<^) [above=2pt of 1<];
			\vertex[dot] (1') [left=25pt of 1] {};
			\vertex (1<^<) [left=1pt of 1<^];
			\vertex[dot] (2) [above right=25pt of 1] {};
			\vertex (4) [above=25pt of 2];
			\vertex (5) [right=25pt of 2];
			\vertex[dot] (3) [below right=25pt of 1] {};
			\vertex (6) [right=25pt of 3];
			\vertex (7) [below=25pt of 3];
			\vertex (8) [below left=25pt of 1'];
			\vertex (9) [above left=25pt of 1<^<];
			\vertex (99) [below left=2pt of 1<^<];
			\vertex (9v) [above left=25pt of 99];
			\propag[gho] (1) to (1<);
			\propag[gho] (1^) to (1<^);
			\propag[pho] (1) to (2);
			\propag[gho] (1) to (3);
			\propag[glu] (2) to (4);
			\propag[pho] (2) to (5);
			\propag[glu] (3) to (6);
			\propag[gho] (3) to (7);
			\propag[glu] (1') to (8);
			\propag[gho] (1<^<) to (9);
			\propag[gho] (99) to (9v);
		\end{feynhand}
	\end{tikzpicture}
\quad ,\quad\quad\quad 
	\begin{tikzpicture}
		\begin{feynhand}
			\vertex[dot] (0) {};
			\vertex (1);
			\vertex (1^) [above=2pt of 1];
			\vertex (1<) [left=25pt of 1];
			\vertex (1<^) [above=2pt of 1<];
			\vertex[dot] (1') [left=25pt of 1] {};
			\vertex (1<^<) [left=1pt of 1<^];
			\vertex[dot] (2) [above right=25pt of 1] {};
			\vertex (4) [above right=25pt of 2];
			\vertex (5) [below right=25pt of 2];
			\vertex (5') [above left=25pt of 2];
			\vertex[dot] (3) [below right=25pt of 1] {};
			\vertex (6) [right=25pt of 3];
			\vertex (7) [below=25pt of 3];
			\vertex (8) [below left=25pt of 1'];
			\vertex (9) [above left=25pt of 1<^<];
			\vertex (99) [below left=2pt of 1<^<];
			\vertex (9v) [above left=25pt of 99];
			\propag[gho] (1) to (1<);
			\propag[gho] (1^) to (1<^);
			\propag[pho] (1) to (2);
			\propag[gho] (1) to (3);
			\propag[glu] (2) to (4);
			\propag[gho] (2) to (5);
			\propag[glu] (2) to (5');
			\propag[glu] (3) to (6);
			\propag[gho] (3) to (7);
			\propag[glu] (1') to (8);
			\propag[gho] (1<^<) to (9);
			\propag[gho] (99) to (9v);
		\end{feynhand}
	\end{tikzpicture}
\quad .
\end{equation}
But all possible branches contain at least one remaining $A$-leg. By evoking Proposition \ref{prop1}, the proof is completed.
\end{proof}

\begin{corollary}
Any connected loop diagram containing a $(\Phi_i\ne\{B,b\})$-external leg vanishes. \label{cor2}
\end{corollary}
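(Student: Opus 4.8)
The plan is to argue by contradiction. Suppose $\Gamma$ is a connected loop diagram carrying an external leg $E$ of field type $\Phi_i\notin\{B,b\}$, and suppose $\Gamma\neq0$; I will deduce that $\Gamma$ must in fact be acyclic, contradicting the hypothesis that it is a loop diagram. The starting point is that $\langle AA\rangle=0$ to all orders (Eq.~\eqref{AA1}): a non-vanishing $\Gamma$ can contain no $\langle AA\rangle$ line, so Proposition~\ref{prop1} applies to every $A$-leg occurring in $\Gamma$ --- whether internal, or attached directly to an external point --- and each such $A$-leg is swallowed by a cascade of $BAA$ vertices terminating only on external $B$- or $b$-legs. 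Because each $BAA$ vertex carries a single $B$-slot, no $B$-slot can receive two lines, so the cascade generated by an $A$-leg is itself a \emph{tree}; hence the whole ``$A$-sector'' of $\Gamma$ (all $A$-lines, all $\langle AB\rangle$ and $\langle Ab\rangle$ lines, all $BAA$ vertices) is a forest whose leaves rest on external $B/b$-legs, and it contains no loop.

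I would then trace $E$ into the diagram. First, $E$ cannot be an $A$-leg: by \eqref{eq.propags} an external $A$ can only reach a vertex through $\langle AB\rangle$ at a $BAA$ vertex ($\langle AA\rangle=0$, and $\langle Ab\rangle$ has no vertex on its $b$-end), and then that vertex's remaining two legs feed only the $A$-forest, so the connected component of $E$ would already be a tree. For $E\in\{\psi,c,\phi,\bar c,\bar\phi,\bar\eta,\bar\chi\}$, inspecting \eqref{eq.propags} shows that $E$ enters $\Gamma$ at one of the vertices $\bar\chi Ac$, $A\psi\bar\chi$, $Ac\bar c$, $A\bar\phi\phi$, $AA\bar\chi c$ or $\bar\phi c\psi$. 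If a $\bar\phi c\psi$ vertex occurs anywhere in $\Gamma$, then $E$ lies on a branch issuing from it ($\Gamma$ being connected), and Corollary~\ref{cor1} forces $\Gamma=0$; so I may assume no $\bar\phi c\psi$ vertex is present. Each of the remaining admissible vertices, once one of its non-$A$ legs has been used to attach the incoming line, has \emph{exactly one} non-$A$ leg left free --- all its other legs being $A$-legs, which feed the forest --- and that free leg is joined, through one of $\langle c\bar c\rangle$, $\langle\bar\chi\psi\rangle$ or $\langle\phi\bar\phi\rangle$ (never $\langle\bar\eta\psi\rangle$, whose $\bar\eta$-end sits on no vertex), either to the next admissible vertex or to an external point, which is then again of type $\notin\{B,b\}$. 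Iterating, one obtains a uniquely determined open path $E\to V_1\to V_2\to\cdots$ of such vertices, ending on an external leg; since every vertex on the path has all of its legs committed --- two along the path, the rest to the $A$-forest --- the path can never revisit a vertex, cannot close, and admits no further line attached to it.

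Assembling the pieces, the connected component of $E$ --- which, $\Gamma$ being connected, is all of $\Gamma$ --- is precisely this open path decorated by the acyclic $A$-forest, and is therefore itself acyclic. This contradicts the hypothesis that $\Gamma$ is a loop diagram, so $\Gamma=0$, which is the assertion of the corollary. The step I expect to be the main obstacle is the middle one: one must check, vertex by vertex against the Feynman rules \eqref{eq.vertices} and the quantum numbers of Table~\ref{table1}, that stripping off the forced $A$-cascades really leaves exactly one free non-$A$ leg at every admissible vertex, that the only propagators joining two such legs are the three listed, and --- most importantly --- that none of them can reconnect to an already-visited vertex, i.e.\ that no loop can hide in the ``non-$A$ sector'' of the diagram.
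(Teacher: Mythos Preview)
Your proposal is correct and rests on the same ingredients as the paper's proof --- Proposition~\ref{prop1} for the $A$-sector and Corollary~\ref{cor1} for the $\bar\phi c\psi$ vertex --- but you organize them differently. The paper simply case-splits on the vertex the external leg $E$ first meets: if that vertex contains an $A$-slot, Proposition~\ref{prop1} applies to the now-internal $A$-leg and ``no loop can be constructed''; if it is $\bar\phi c\psi$, one observes that $\bar\phi c\psi$ cannot propagate to another $\bar\phi c\psi$, so one is forced to a vertex with an $A$-slot and Corollary~\ref{cor1} finishes. Your argument is more global: you first identify the entire $A$-sector as a forest of $BAA$-cascades (the single $B$-slot per vertex being the reason no cycle can form there), then dispose of the $\bar\phi c\psi$ case via Corollary~\ref{cor1}, and finally trace the non-$A$ sector from $E$ as a uniquely determined open path, using that every remaining vertex has exactly two non-$A$ legs. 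What your version buys is an explicit combinatorial reason why ``no loop can be constructed'', which the paper leaves to the reader; the paper's version is shorter because it does not unpack that phrase. Your invocation of Corollary~\ref{cor1} when a $\bar\phi c\psi$ vertex is present is legitimate: since $E$ lies on one of its three branches and $E\notin\{B,b\}$, that branch cannot end only in external $B/b$-legs, so the ``unless'' clause fails and the diagram vanishes.
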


\begin{proof}
There are two steps toward this proof: 1) consider the external leg joined to a vertex containing an $A$ field. In this case, $A$ is an internal leg. Thus, Proposition \ref{prop1} takes place and the graph either vanishes or generates a branch with external $B$- or $b$-legs and no loop can be constructed; 2) now, consider the external leg joined to a vertex not containing $A$, \emph{i.e.} the vertex $\bar{\phi}c\psi$. The field $\bar{\phi}$ only propagates through $\langle \bar{\phi}\phi \rangle$, $c$ through $\langle\bar{c}c \rangle$, and $\psi$ only through $\langle\bar{\chi}\psi \rangle$ or $\langle\bar{\eta}\psi\rangle$. For this reason, it is impossible to propagate the vertex $\bar{\phi} c \psi$ to another vertex $\bar{\phi}c \psi$. In other words, from the vertex $\bar{\phi}c\psi$, we should necessarily propagate it to the vertexes containing an $A$ field. Now, Corollary \ref{cor1} takes place and the graph, again, either vanishes or generates a branch with external $B$- or $b$-legs and no loop can be constructed.
\end{proof}

\begin{proposition}
Any connected $n$-point function of the form $\langle B(x_1)B(x_2)...b(x_{n-1})b({x_n})\rangle$ vanishes.\label{prop2}
\end{proposition}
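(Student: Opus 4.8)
The plan is to establish Proposition \ref{prop2} by a BRST-cohomology argument rather than by continuing the diagrammatic analysis of the preceding propositions. The reason is that when all external legs are $B$ or $b$ one can assemble loop diagrams purely out of $BAA$ vertices joined by $\langle BA\rangle$ propagators (for instance a one-loop triangle of three $BAA$ vertices carrying one external $B$-leg each); such $A$-branches close into loops without passing through an $\langle AA\rangle$ line, so Proposition \ref{prop1} does not by itself force their vanishing. Instead I would exploit the BRST-triviality of the external fields.

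First I would note, from \eqref{brst2}, that every field in the correlator is BRST-exact, $B^a_{\mu\nu}=s\bar{\chi}^a_{\mu\nu}$ and $b^a=s\bar{c}^a$, hence also BRST-closed by nilpotency of $s$, $sB^a_{\mu\nu}=sb^a=0$. Since $B$ and $b$ carry vanishing ghost number they are commuting, so any product of them taken at distinct points is again BRST-closed. Pulling $s$ out of the first factor then yields
\[
 B(x_1)\,B(x_2)\cdots b(x_{n-1})\,b(x_n)\;=\;s\Bigl(\bar{\chi}(x_1)\,B(x_2)\cdots b(x_{n-1})\,b(x_n)\Bigr)\;,
\]
the term in which $s$ would hit the remaining product dropping out because that product is even and BRST-closed; the same manipulation works for any number and any ordering of $B$- and $b$-insertions.

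Next I would invoke the Slavnov--Taylor identity of the theory: the action \eqref{fullaction} is BRST invariant and, as part of the renormalizability analysis of \cite{Junqueira:2017zea}, the BRST symmetry is anomaly-free, so that the vacuum is BRST invariant and $\langle s\,\mathcal{O}\rangle=0$ for any local insertion $\mathcal{O}$. Applying this with $\mathcal{O}=\bar{\chi}(x_1)\,B(x_2)\cdots b(x_n)$ gives $\langle B(x_1)\,B(x_2)\cdots b(x_{n-1})\,b(x_n)\rangle=0$, which is the assertion of the proposition.

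The main obstacle is the justification of $\langle s\,\mathcal{O}\rangle=0$ for this composite insertion; it relies on the Slavnov--Taylor identity already established for the model together with the fact that the insertions sit at distinct spacetime points, so that no additional contact-term subtleties enter. A secondary, purely bookkeeping issue is tracking Grassmann signs when commuting $s$ through the product, which is harmless here precisely because all spectator factors carry ghost number zero and are BRST-closed.
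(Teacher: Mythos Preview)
Your argument is correct and is essentially the same as the paper's own proof: replace one external $B$ (or $b$) by $s\bar{\chi}$ (or $s\bar{c}$), use that the remaining factors are BRST-closed to pull $s$ outside the product, and invoke $\langle s\,\mathcal{O}\rangle=0$. The paper presents this in two lines, equations \eqref{prop2a}--\eqref{prop2b}, without your additional commentary on why the diagrammatic route is insufficient, but the core mechanism is identical.
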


\begin{proof}
Due to \eqref{brst2}, and the fact that expectation values of any BRST-exact terms vanish. One can write these $n$-functions as BRST-exact correlators, namely
\begin{equation}
\langle BBB\ldots bb\rangle=\langle s\bar{\chi}BB \ldots bb\rangle=\langle s(\bar{\chi}BB \ldots bb)\rangle=0\;,\label{prop2a}
\end{equation}
and
\begin{equation}
\langle BBB \ldots bb\rangle=\langle BB \ldots s\bar{c}b\rangle=\langle s(BBB \ldots\bar{c}bb)\rangle=0\;,\label{prop2b}
\end{equation}
which vanish due to BRST-invariance.
\end{proof}

\begin{proposition}
All connected n-point Green functions are tree-level exact. \label{prop3}
\end{proposition}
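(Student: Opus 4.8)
The plan is to expand each connected $n$-point Green function as its tree-level part plus the sum of all connected loop diagrams, so that \emph{tree-level exact} becomes the assertion that every connected loop diagram vanishes. The case analysis is organised by the field species carried by the external legs, and it is driven entirely by Corollary \ref{cor2} and Proposition \ref{prop2}.

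First I would dispose of the generic case: a connected Green function at least one of whose external legs is some $\Phi_i\notin\{B,b\}$. Any connected loop diagram contributing to it carries that same $\Phi_i$-leg among its external legs and therefore vanishes by Corollary \ref{cor2}. Hence such a Green function coincides with the sum of its tree diagrams, i.e.\ it is tree-level exact.

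There remain the Green functions whose external legs all lie in $\{B,b\}$, that is, correlators of the form $\langle B\cdots B\,b\cdots b\rangle$ with any distribution of the two species (including the purely $B$ and purely $b$ cases). For these Corollary \ref{cor2} gives no information, but Proposition \ref{prop2} already states that they vanish identically, to all orders. It then suffices to observe that their tree-level contribution vanishes as well, so that tree-level exactness holds in the trivial form $0=0$. I would obtain this from the cascade mechanism of Proposition \ref{prop1}: for $n\ge 1$, an external $B$- or $b$-leg forces, through $\langle BA\rangle$ or $\langle Ab\rangle$, an internal $A$-leg; following it, the diagram is driven either into a pure network of $BAA$ vertices or through one of the other $A$-vertices, the latter spawning $\bar\chi,\psi,c,\bar c,\phi,\bar\phi$ lines that cannot be terminated in a finite tree since only $B$ and $b$ are available as external legs, while any closure of two $A$-legs requires the vanishing $\langle AA\rangle$ propagator. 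A short leg count on a pure $BAA$ network ($2V$ internal $A$-legs against its $V$ internal $B$-legs together with the external $B$- and $b$-legs) shows that it always carries exactly one loop, so it never contributes at tree level. Hence the tree-level value of $\langle B\cdots bb\rangle$ vanishes, consistently with the all-orders value of Proposition \ref{prop2}.

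Assembling the two cases, every connected $n$-point Green function equals its tree-level value, which is the claim. The only non-mechanical point, and hence the main obstacle, is the $\langle B\cdots bb\rangle$ sector, where Corollary \ref{cor2} is silent and one must separately certify both that the full correlator vanishes (Proposition \ref{prop2}) and that its tree-level part does too; the rest is bookkeeping over external-leg species, the awkward $\bar\phi c\psi$ vertex having already been absorbed into Corollaries \ref{cor1} and \ref{cor2}.
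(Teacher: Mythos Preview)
Your proposal is correct and follows the same two-case split as the paper: Corollary~\ref{cor2} kills all loop contributions when at least one external leg lies outside $\{B,b\}$, and Proposition~\ref{prop2} handles the pure $\langle B\cdots b\rangle$ sector. The only difference is that you additionally verify, by a diagrammatic leg count, that the tree-level part of $\langle B\cdots b\rangle$ vanishes, whereas the paper simply asserts ``this Green function is zero and receives no radiative corrections''; note that the same conclusion follows more cheaply by observing that the BRST argument of Proposition~\ref{prop2} already holds order by order, and in particular at tree level.
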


\begin{proof}
Let us take a connected loop diagram with $n$ external legs with arbitrary fields $\Phi_i$. From Corollary \ref{cor2}, if there is at least one field different from $B$ or $b$, the graph either vanishes or is a tree-level graph. Then, there remains the possibility of a graph with $n$ external legs formed by $B$ or $b$ fields. In this case Proposition \ref{prop2} takes over and the Green function $\langle BB\ldots bb\rangle$ vanishes, meaning that this Green function is zero and receive no radiative corrections. Hence, all connected $n$-point Green functions are tree-level exact.
\end{proof}

\section{Conclusions}\label{FINAL}

In this work we have considered four-dimensional topological Yang-Mills theories quantized in the (anti-)self-dual Landau gauges \cite{Witten:1988ze,Brooks:1988jm,Birmingham:1991ty,Baulieu:1988xs,Brandhuber:1994uf,Junqueira:2017zea}, which is renormalizable to all orders in perturbation theory \cite{Brandhuber:1994uf,Junqueira:2017zea}. This particular gauge choice displays a rich set of symmetries, which implies on a counterterm containing only one independent renormalization parameter, see \cite{Junqueira:2017zea}. Moreover, the gauge propagator vanishes to all orders.

The fact that the gauge propagator vanishes exactly was employed to show the main result of this paper: \emph{All connected n-point Green functions of four-dimensional topological gauge theories quantized in the (anti-)self-dual Landau gauges are tree-level exact.} This means that, in this gauge, the theory remains ``classical" because there are no radiative corrections to be considered. This is a very interesting, yet subtle, result. The subtlety lives on the fact that the theory is not finite (there is a non-trivial counterterm to be included in order to absorb the divergences of the theory \cite{Junqueira:2017zea}) but the divergences are canceled out due to the vanishing of the gauge propagator which is always needed in order to close a loop diagram or due to the BRST symmetry.

It is worth mentioning that the topological gauge theory considered here is essentially of the Donaldson-Witten type \cite{Witten:1988ze} formulated in a different gauge choice \cite{Baulieu:1988xs}. Since Donaldson-Witten theory is related to $N=2$ Wess-Zumino supersymmetric theory via a twist \cite{Fucito:1997xm}, it could be interesting to understand the corresponding supersymmetric theory in different gauge choices such as the (A)SDLG. Specifically, the consequences of the absence of radiative corrections in the (A)SDLG could imply in some interesting features of its supersymmetric counterpart. Moreover, supersymmetric formulations of the BRST transformations \eqref{brst1} and \eqref{brst2}, as in for instance \cite{Boldo:2003jq,Boldo:2003ci}, could also enlighten such investigation. 

\section*{Acknowledgements}

The Conselho Nacional de Desenvolvimento Cient\'ifico e Tecnol\'ogico (CNPq - Brazil) and the Coordena\c{c}\~ao de Aperfei\c{c}oamento de Pessoal de N\'ivel Superior (CAPES) are acknowledged for financial support. ADP acknowledges funding by the DFG, Grant Ei/1037-1.

\bibliographystyle{utphys2}
\bibliography{library}
		
\end{document}